\newtheorem{theorem}{Theorem}[section]
\newtheorem{lemma}[theorem]{Lemma}
\newtheorem{definition}[theorem]{Definition}
\newcommand{\RR}{\mathbb{R}}
\newcommand{\SSS}{\mathbb{S}}
\newcommand{\EE}{\text{\sf E}}
\newcommand{\Var}{\text{\sf Var}}
\newcommand{\FA}{\text{\sf FA}}
\begin{document}

\begin{frontmatter}



\title{First and Second Moments and Fractional Anisotropy \\
of General von Mises-Fisher and Peanut Distributions}


\author[label1,label2]{Alexandra Shyntar, Thomas Hillen} 


 \affiliation[label1]{organization={University of Alberta},
             addressline={North Campus},
             city={Edmonton},
             postcode={T6G 2G1},
             state={Canada},
             country={shyntar@ualberta.ca}}
 \affiliation[label2]{organization={University of Alberta},
             addressline={North Campus},
             city={Edmonton},
             postcode={T6G 2G1},
             state={Canada},
             country={thillen@ualberta.ca}}

\begin{abstract}
Spherical distributions, in particular, the von Mises-Fisher distribution, are often used for problems using or modelling directional data. Since expectation and variance-covariance matrices follow from the first and second moments of the spherical distribution, the moments often need to be approximated numerically by computing trigonometric integrals. Here, we derive the explicit forms of the first and second moments for an $n$-dimensional von Mises-Fisher and peanut distributions by making use of the divergence theorem in the calculations. The derived formulas can be easily used in simulations, significantly decreasing the computation time. Moreover, we compute the fractional anisotropy formulas for the diffusion tensors derived from the bimodal von Mises-Fisher and peanut distributions, and show that the peanut distribution is limited in the amount of anisotropy it permits, making the von Mises-Fisher distribution a better choice when modelling anisotropy.
\end{abstract}

\begin{keyword}
Spherical distributions \sep Second moment \sep von Mises-Fisher distribution \sep Peanut distribution \sep Fractional anisotropy



\end{keyword}

\end{frontmatter}




\section{Introduction}
Spherical probability distributions have been studied for decades \cite{mardia} and applied in many scientific disciplines, most prominently in geology and astronomy \cite{kunze2004bingham,matheson2023mises}. More recently, spherical distributions have found applications in biology and medicine, whenever the orientation of a moving individual is of importance. This applies to wolves following seismic lines \cite{wolf}, sea turtles finding their breeding islands \cite{turtle}, the migration of whales and butterflies \cite{whale,butterfly}, as well as oriented cell motion such as leukocytes directed motion \cite{leukocyte} and directed invasion of brain cancer cells \cite{swanhillen}.
In all these cases, spherical distributions are used to fit movement data, and to build mathematical models.

Needless to say, the moments of these distributions, in particular expectation and variance, are of central importance, and for some of the distributions these are unknown. We close this gap by computing first and second moments for the general $n$-dimensional unimodal and bimodal von Mises-Fisher distributions, and the $n$-dimensional peanut distribution. We also determine the diffusion tensors in two and three dimensions for these spherical distributions and calculate their eigenvalues and fractional anisotropy formulas.

Mathematically, a spherical distribution is a probability distribution on the unit sphere in $\RR^n$, where $n>0$ denotes the space dimension. We usually denote it as a measure with density $q(\theta)$, $\theta\in \SSS^{n-1}$ with the basic properties
\begin{equation}
\label{q1}  q(\theta)\geq 0, \quad \mbox{for all } \quad \theta\in \SSS^{n-1}, \quad \mbox{ and }\quad  \int_{\SSS^{n-1}} q(\theta) d\theta =1.
\end{equation}
The $n$-dimensional {\bf von Mises-Fisher distribution} is defined for a given unit vector $u\in \SSS^{n-1}$ and a given concentration parameter $k\geq 0$ as  \cite{hillenmoments}
\begin{equation}\label{vonmises}
q(\theta)=\frac{k^{\frac{n}{2}-1}}{(2\pi)^{\frac{n}{2}}I_{\frac{n}{2}-1}(k)}e^{k\theta\cdot u},
\end{equation}
where  $I_p(k)$ with $p\in \mathbb{C}$ defines the modified Bessel function of first kind of order $p$ \cite{toomanyformulas}. If $n=2$ the von Mises-Fisher distribution is simply called the von Mises distribution \cite{hillenmoments}.

The $n$-dimensional {\bf peanut distribution} is defined for a given positive definite anisotropy matrix $A\in \RR^{n\times n}$ as \cite{peanutpaper} 
\begin{equation}\label{peanut}
q(\theta)=\frac{n}{|\mathbb{S}^{n-1}|tr(A)}\theta^T\ A \,\theta
\end{equation}
where $tr(A)$ denotes the trace of $A$, and $|\mathbb{S}^{n-1}|$ is the surface area of an $n-1$ dimensional sphere.

\subsection{Some Basic Identities and Notations}
We denote expectation and variance of the distribution $q(\theta)$ as 
\begin{equation}\label{expectation}
\EE[q] = \int_{\mathbb{S}^{n-1}}^{}\theta q(\theta)d\theta, \qquad 
\Var[q] = \int_{\mathbb{S}^{n-1}}^{} \theta\theta^Tq(\theta)d\theta - \EE[q]\EE[q]^T.
\end{equation}
Here, $ab^T$ for $a,b\in\mathbb{R}^{n}$ denotes the tensor product between two vectors.

The modified Bessel functions of first kind are  defined as \cite{defBessel}
\begin{equation}\label{I}
I_p(x)= \sum_{m=0}^{\infty}\frac{1}{m! \ \Gamma(p+m+1)}\left(\frac{x}{2} \right)^{2m+p},
\end{equation}
where $p\in \mathbb{C} $, $x\in \mathbb{C} $. We will make use the following identities 
\begin{equation}\label{IiJ}
I_p(x)=i^{-p}J_p(ix),\qquad 
\frac{d}{dx}(x^p J_p(x))=x^p J_{p-1}(x),
\end{equation}
where $J_p(x)$ is a Bessel function of order $p$ and $p\in\mathbb{R}$ \cite{toomanyformulas}. We will also make use of the following expansion for large $x\in \RR$, $x>0$ \cite{olver2010nist}
\begin{equation}\label{large_k_bessel}
    I_p(x) \sim \frac{e^x}{\sqrt{2\pi x}}\left(1+ \sum_{n=0}^{\infty}\frac{(-1)^n}{n!(8x)^n} \left( \prod_{m=1}^{n} 4p^2-(2m-1)^2 \right) \right).
\end{equation}
for fractional and integer orders $p$.

We denote by $D_v$ an operator that performs the gradient with respect to $v\in\mathbb{R}^n$ and we note that 
\begin{equation}\label{matrixderivate}
D_x(x^TAx) = Ax+A^Tx
\end{equation}
where $A\in\mathbb{R}^{n\times n}$. 

\section{Von Mises-Fisher Distribution in $n$ dimensions}
We now calculate the first and second moments of the general $n$-dimensional von Mises-Fisher distribution \eqref{vonmises} and apply our results for a bimodal von Mises-Fisher distribution in $n$ dimensions. 

\begin{theorem}\label{thm_vonmises} Consider the 
 von Mises-Fisher distribution \eqref{vonmises} in any dimension $n$. The expectation and variance-covariance matrix are 
 \begin{eqnarray}
 	\EE[q]&=&\frac{I_{\frac{n}{2}}(k)}{I_{\frac{n}{2}-1}(k)}\,u,\label{expectation_vonmises}\\
 \Var[q]&=&\frac{I_{\frac{n}{2}}(k)}{kI_{\frac{n}{2}-1}(k)} \mathbb{I}+\left[\frac{I_{\frac{n}{2}+1}(k)}{I_{\frac{n}{2}-1}(k)} -\left( \frac{I_{\frac{n}{2}}(k)}{I_{\frac{n}{2}-1}(k)} \right)^2\right]uu^T. \label{variance_vonmises}
 \end{eqnarray}
\end{theorem}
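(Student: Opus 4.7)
The plan is to use the divergence theorem on the closed unit ball $B^n\subset\RR^n$ to convert both sphere integrals in \eqref{expectation} into volume integrals, and then to reduce those to one-dimensional Bessel integrals that can be antidifferentiated using identities derived from \eqref{I}--\eqref{IiJ}. Throughout I write $C_n(k)=\frac{k^{n/2-1}}{(2\pi)^{n/2}I_{n/2-1}(k)}$ for the normalization constant in \eqref{vonmises}, so that \eqref{q1} reads $\int_{\SSS^{n-1}} e^{k\theta\cdot u}\,d\theta=1/C_n(k)$.

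For $\EE[q]$, I would apply the divergence theorem to the scalar field $\Phi(x)=e^{kx\cdot u}$ on $B^n$, using that the outward unit normal on $\SSS^{n-1}$ is $\theta$:
\[
\int_{\SSS^{n-1}}\theta\, e^{k\theta\cdot u}\,d\theta=\int_{B^n}\nabla e^{kx\cdot u}\,dx = k u\int_{B^n}e^{kx\cdot u}\,dx.
\]
Expanding the ball integral in spherical coordinates $x=r\theta$ and recognizing $\int_{\SSS^{n-1}}e^{kr\theta\cdot u}d\theta$ as the von Mises-Fisher normalization at concentration $kr$ reduces it to $\frac{(2\pi)^{n/2}}{k^{n/2-1}}\int_0^1 r^{n/2}I_{n/2-1}(kr)\,dr$. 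The identity $\frac{d}{dx}(x^{p}I_{p}(x))=x^{p}I_{p-1}(x)$ (a direct consequence of the series \eqref{I}, or of \eqref{IiJ} via $I_p(x)=i^{-p}J_p(ix)$) then evaluates the $r$-integral in closed form as $k^{-1}I_{n/2}(k)$, and multiplying by $C_n(k)$ yields \eqref{expectation_vonmises}.

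For the second moment I would apply the divergence theorem entrywise with the vector field whose only nonzero component is $x_i e^{kx\cdot u}$ in the $j$-th slot, which produces, in matrix form,
\[
\int_{\SSS^{n-1}}\theta\theta^T e^{k\theta\cdot u}\,d\theta = \left(\int_{B^n}e^{kx\cdot u}\,dx\right)\mathbb{I} + k\left(\int_{B^n}x\, e^{kx\cdot u}\,dx\right)u^T.
\]
The scalar ball integral has already been computed above. The vector ball integral is parallel to $u$ by rotational symmetry about the $u$-axis, and its coefficient along $u$ equals $\frac{d}{dk}\int_{B^n}e^{kx\cdot u}\,dx$ by differentiating under the integral and using $|u|=1$. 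The companion Bessel identity $\frac{d}{dx}(x^{-p}I_p(x))=x^{-p}I_{p+1}(x)$, again obtainable from the series \eqref{I}, turns this derivative into $(2\pi)^{n/2}k^{-n/2}I_{n/2+1}(k)$. Combining the two pieces, multiplying by $C_n(k)$, and subtracting $\EE[q]\EE[q]^T$ produces \eqref{variance_vonmises}.

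Conceptually the proof is short; the main obstacle is bookkeeping — matching the powers of $k$, the factors of $(2\pi)^{n/2}$, and the orders of the Bessel functions across several manipulations — together with deriving both the raising and lowering Bessel identities used above from the material already given in \eqref{I}--\eqref{IiJ}. A small subtlety is that the antiderivative $s^{n/2}I_{n/2}(s)$ must vanish as $s\to 0^+$ in order for the definite radial integral to collapse cleanly; this holds for every $n\geq 1$ because the leading behaviour of $I_{n/2}$ near the origin contributes an overall factor of $s^n$ to that product.
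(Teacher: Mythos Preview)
Your proposal is correct and follows essentially the same approach as the paper: apply the divergence theorem on $B^n$ to rewrite the sphere integrals, pass to spherical coordinates, use the vMF normalization to identify the angular integral, and evaluate the remaining radial integral via the Bessel recursion $\frac{d}{dx}(x^pI_p(x))=x^pI_{p-1}(x)$. The one genuine difference is in the computation of $\int_{B^n}x\,e^{kx\cdot u}\,dx$: the paper repeats the spherical-coordinate reduction and applies the same raising identity a second time (with $p=\tfrac{n}{2}+1$), whereas you observe that this integral equals $u\cdot\frac{d}{dk}\int_{B^n}e^{kx\cdot u}\,dx$ and then invoke the companion identity $\frac{d}{dx}(x^{-p}I_p(x))=x^{-p}I_{p+1}(x)$. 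Your route is slightly slicker---it avoids redoing the radial integral---but requires both Bessel recursions rather than just one; either way the bookkeeping is the same and the result is identical.
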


\begin{proof}
{\bf Expectation:} Let
\begin{equation}\label{cvonmises}
	c= \frac{k^{\frac{n}{2}-1}}{(2\pi)^{\frac{n}{2}}I_{\frac{n}{2}-1}(k)}.
	\end{equation}
Let $b\in\mathbb{R}^{n}$ be an arbitrary vector, then using \eqref{vonmises} we obtain
	\begin{equation}
	\begin{aligned}
	b\cdot \frac{1}{c} \EE[q] & = b\cdot\int_{\mathbb{S}^{n-1}}^{} \theta e^{k\theta u}d\theta 
	= \int_{\mathbb{B}_1(0)}^{} \frac{\partial}{\partial v^j}b_j e^{kv_l u^l}dv
	= b\cdot uk\int_{\mathbb{B}_1(0)}^{}e^{kvu}dv\\
	&= b\cdot uk\int_{0}^{1}\int_{\mathbb{S}^{n-1}}^{}e^{krvu}r^{n-1}dvdr
	= b\cdot u\frac{(2\pi)^{\frac{n}{2}}}{k^{\frac{n}{2}-2}}\int_{0}^{1}r^{\frac{n}{2}}I_{\frac{n}{2}-1}(rk)dr,
    \label{calcs333}
	\end{aligned}
	\end{equation}
	where we use Einstein summation convention for repeated indices, we applied the divergence theorem at the second equivalence and applied \eqref{q1} at the last equivalence.
	
	Making use of the Bessel identities \eqref{IiJ}, we can rewrite the term within the integral, that is
	\begin{equation}\label{maybehandy}
	\begin{aligned}
	r^{\frac{n}{2}}I_{\frac{n}{2}-1}(rk)=r^{\frac{n}{2}}i^{-\frac{n}{2}+1}J_{\frac{n}{2}-1}(irk)=\frac{i^{-n}}{k^{\frac{n}{2}+1}}\frac{d}{dr}(irk)^{\frac{n}{2}}J_{\frac{n}{2}}(irk).
	\end{aligned}
	\end{equation}
	Using \eqref{maybehandy}, we can simplify further and apply the fundamental theorem of calculus to obtain that
	\begin{equation}\label{eq333}
	b\cdot u\frac{(2\pi)^{\frac{n}{2}}}{k^{\frac{n}{2}-2}}\int_{0}^{1}r^{\frac{n}{2}}I_{\frac{n}{2}-1}(rk)dr=b\cdot u\frac{(2\pi)^{\frac{n}{2}}}{k^{\frac{n}{2}-1}}i^{-\frac{n}{2}}J_{\frac{n}{2}}(ik)=b\cdot u\frac{(2\pi)^{\frac{n}{2}}}{k^{\frac{n}{2}-1}}I_{\frac{n}{2}}(k).
	\end{equation}
	Therefore, we find that 
	\begin{equation*}
	b\cdot \frac{1}{c} \EE[q] = b\cdot u\frac{(2\pi)^{\frac{n}{2}}}{k^{\frac{n}{2}-1}}I_{\frac{n}{2}}(k).
	\end{equation*}
	After simplifying, and noting that $b$ is an arbitrary vector, it follows that
	\begin{equation}\label{vMexpectation}
	\EE[q]=\frac{I_{\frac{n}{2}}(k)}{I_{\frac{n}{2}-1}(k)}u.
	\end{equation}

\noindent {\bf Second moment:}
We now calculate the second moment as well as the variance-covariance matrix for the von Mises-Fisher distribution \eqref{vonmises}. We directly see that 
\begin{equation}\label{form1}
\EE[q]\EE[q]^T=\left( \frac{I_{\frac{n}{2}}(k)}{I_{\frac{n}{2}-1}(k)} \right)^2uu^T.
\end{equation}
To compute the second moment, we take two test vectors  $a,b\in\mathbb{R}^{n}$, and compute 
\begin{equation}
\begin{aligned}\notag
a^T\int_{\mathbb{S}^{n-1}}^{} \theta\theta^Tq(\theta)d\theta \, b 
&= c\int_{\mathbb{B}_1(0)}^{}\frac{\partial}{\partial v^j}(a_jb_mv^me^{kv^l u_l})dv\\
&=c\left(a^Tb \int_{\mathbb{B}_1(0)}^{}e^{kvu}dv+ka^Tub\cdot\int_{\mathbb{B}_1(0)}^{}ve^{kvu}dv \right).
\end{aligned}
\end{equation}

We already know that 
$
\int_{\mathbb{B}_1(0)}^{}e^{kvu}dv=\frac{(2\pi)^{\frac{n}{2}}I_{\frac{n}{2}}(k)}{k^{\frac{n}{2}}},
$
from \eqref{calcs333}  and \eqref{eq333}. Further, 
\begin{equation}
\begin{aligned}\notag
\int_{\mathbb{B}_1(0)}^{}ve^{kvu}dv 
&=\int_{0}^{1} r^n\int_{\mathbb{S}^{n-1}}^{}ve^{kr\theta u}dvdr=\frac{(2\pi)^{\frac{n}{2}}}{k^{\frac{n}{2}-1}}u \int_{0}^{1} r^{\frac{n}{2}+1}I_{\frac{n}{2}}(rk)dr
\end{aligned}
\end{equation}
by applying \eqref{q1} at the second equivalence. By using the Bessel identities \eqref{IiJ}, we can rewrite the term in the integral as
\begin{equation}\label{handy}
r^{\frac{n}{2}+1}I_{\frac{n}{2}}(rk)=r^{\frac{n}{2}+1}i^{-\frac{n}{2}}J_{\frac{n}{2}}(irk)=\frac{i^{-n-2}}{k^{\frac{n}{2}+2}}\frac{d}{dr}(irk)^{\frac{n}{2}+1}J_{\frac{n}{2}+1}(irk).
\end{equation}
Using \eqref{handy} and \eqref{IiJ}, we obtain that
\begin{equation}
\frac{(2\pi)^{\frac{n}{2}}}{k^{\frac{n}{2}-1}}u \int_{0}^{1} r^{\frac{n}{2}+1}I_{\frac{n}{2}}(rk)dr =\frac{(2\pi)^{\frac{n}{2}}i^{-\frac{n}{2}-1}}{k^{\frac{n}{2}}}uJ_{\frac{n}{2}+1}(ik)=\frac{(2\pi)^{\frac{n}{2}}}{k^{\frac{n}{2}}}uI_{\frac{n}{2}+1}(k).
\end{equation}
Thus,
\begin{equation}\label{int2}
\int_{\mathbb{B}_1(0)}^{}ve^{kvu}dv = \frac{(2\pi)^{\frac{n}{2}}}{k^{\frac{n}{2}}}uI_{\frac{n}{2}+1}(k).
\end{equation}
Using \eqref{eq333} and \eqref{int2}, we find that
\begin{equation}\label{form2}
\int_{\mathbb{S}^{n-1}}^{} \theta\theta^Tq(\theta)d\theta = \frac{I_{\frac{n}{2}}}{kI_{\frac{n}{2}-1}} \mathbb{I}+\frac{I_{\frac{n}{2}+1}}{I_{\frac{n}{2}-1}}uu^T,
\end{equation}
since $a,b$ are arbitrary.

Substituting \eqref{form1} and \eqref{form2}  into the variance formula \eqref{expectation}, we obtain that
\begin{equation}
\Var[q]=\frac{I_{\frac{n}{2}}(k)}{kI_{\frac{n}{2}-1}(k)} \mathbb{I}+\frac{I_{\frac{n}{2}+1}(k)}{I_{\frac{n}{2}-1}(k)}uu^T -\left( \frac{I_{\frac{n}{2}}(k)}{I_{\frac{n}{2}-1}(k)} \right)^2uu^T.
\end{equation}
\end{proof}

\begin{lemma}
Consider the bimodal von Mises-Fisher distribution
\begin{equation}\label{bi_vonmises}
q(\theta)=\frac{k^{\frac{n}{2}-1}}{2(2\pi)^{\frac{n}{2}}I_{\frac{n}{2}-1}(k)}(e^{k\theta\cdot u}+e^{-k\theta\cdot u}).
\end{equation}
The expectation and the variance-covariance matrix for \eqref{bi_vonmises} is given by
 \begin{eqnarray*}
 	\EE[q]&=&0,\\
 \Var[q]&=&\frac{I_{\frac{n}{2}}(k)}{kI_{\frac{n}{2}-1}(k)} \mathbb{I}+\frac{I_{\frac{n}{2}+1}(k)}{I_{\frac{n}{2}-1}(k)} uu^T.
 \end{eqnarray*}
\end{lemma}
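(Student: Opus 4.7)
The plan is to exploit the fact that the bimodal density \eqref{bi_vonmises} is exactly the average of two unimodal von Mises-Fisher densities with opposite mean directions $+u$ and $-u$, and then invoke Theorem \ref{thm_vonmises} by linearity of the integral.

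First I would write $q(\theta) = \tfrac{1}{2}(q_+(\theta) + q_-(\theta))$, where $q_\pm$ is the unimodal von Mises-Fisher density \eqref{vonmises} with direction $\pm u$ and the same concentration $k$. Both distributions share the normalizing constant $c$ in \eqref{cvonmises} because \eqref{cvonmises} does not depend on the sign of $u$. Linearity of expectation then gives $\EE[q] = \tfrac{1}{2}(\EE[q_+] + \EE[q_-])$, and by \eqref{expectation_vonmises} applied with directions $\pm u$ these are $\pm\frac{I_{n/2}(k)}{I_{n/2-1}(k)} u$, which cancel to give $\EE[q] = 0$.

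Next I would compute the second moment by the same averaging. From \eqref{form2} applied separately with $u$ and $-u$, the identity term is sign-independent and the rank-one term $uu^T$ is invariant under $u \mapsto -u$, so
\begin{equation*}
\int_{\mathbb{S}^{n-1}} \theta\theta^T q(\theta)\, d\theta = \frac{I_{n/2}(k)}{kI_{n/2-1}(k)}\,\mathbb{I} + \frac{I_{n/2+1}(k)}{I_{n/2-1}(k)} uu^T.
\end{equation*}
Since $\EE[q]\EE[q]^T = 0$, the variance-covariance matrix coincides with the second moment, which yields the stated formula.

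There is no real obstacle here: the only thing to verify carefully is that the normalization constant in \eqref{bi_vonmises} is consistent with writing the density as a one-half average of two unimodal von Mises-Fisher densities, which follows immediately from the $u \mapsto -u$ symmetry of \eqref{cvonmises}. Everything else is an application of linearity to the formulas already established in Theorem \ref{thm_vonmises}.
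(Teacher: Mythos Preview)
Your proposal is correct and follows essentially the same approach as the paper, which simply states that the formulas follow from Theorem~\ref{thm_vonmises} by direct summation of the corresponding terms. You have just spelled out explicitly the averaging $q = \tfrac12(q_+ + q_-)$ and the cancellations that the paper leaves implicit.
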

\begin{proof}
These formulas follow easily from the formulas determined in Theorem \ref{thm_vonmises} by direct summation of the corresponding terms.

\end{proof}

{\bf Example for $n=3$:} It is interesting to consider the important case of the von Mises-Fisher distribution (\ref{vonmises}) for $n=3$. For $n=3$ we find from Theorem \ref{thm_vonmises} that 
 \begin{equation*}
 	\EE[q]=\frac{I_{\frac{3}{2}}(k)}{I_{\frac{1}{2}}(k)}\,u, \qquad 
 \Var[q]=\frac{I_{\frac{3}{2}}(k)}{kI_{\frac{1}{2}}(k)} \mathbb{I}+\left[\frac{I_{\frac{5}{2}}(k)}{I_{\frac{1}{2}}(k)} -\left( \frac{I_{\frac{3}{2}}(k)}{I_{\frac{1}{2}}(k)} \right)^2\right]uu^T.
 \end{equation*}
In \cite{hillenmoments}, these terms were represented as 
\begin{equation*}
\EE[q] = \left(\coth k - \frac{1}{k}\right)u, \qquad 
\Var[q] = \left( \frac{\coth k}{k} - \frac{1}{k^2} \right) \mathbb{I} + \left[ 1-\frac{\coth k}{k} + \frac{2}{k^2} - \coth^2k \right]uu^T.
\end{equation*}
Hence we can compare the coefficients and find the following identities for Bessel functions:
\begin{eqnarray}
\frac{I_{\frac{3}{2}}(k)}{I_{\frac{1}{2}}(k)} &=&\coth k - \frac{1}{k}\label{bessel_formula2},\\
\frac{I_{\frac{5}{2}}(k)}{I_{\frac{1}{2}}(k)} -\left( \frac{I_{\frac{3}{2}}(k)}{I_{\frac{1}{2}}(k)} \right)^2 &=&  1-\frac{\coth k}{k} + \frac{2}{k^2} - \coth^2k.\label{bessel_formula3}
\end{eqnarray}
We emphasize that these formulas \eqref{bessel_formula2} and \eqref{bessel_formula3} are not new and appear in different forms in \cite{toomanyformulas}.

\section{Peanut distribution in $n$ dimensions}

We first note that the peanut distribution \eqref{peanut} is of the form 
\[ q(\theta)= F(\theta^T A \theta),\]
with an appropriate function $F$, where $A$ is a given matrix. 

There are two other prominent distributions that have the form $F(\theta^T A^{-1} \theta)$, which is the $n$-dimensional {\bf ordinary distribution function (ODF)} \cite{conte} 
\begin{equation}\label{ODF}
q(\theta)=\frac{1}{4\pi|A|^{\frac{1}{2}}(\theta^T A^{-1}\theta)^{\frac{3}{2}}},
\end{equation}
and  the {\bf Bingham distribution}  \cite{mardia, bingham} 
\begin{equation}\label{bingham}
q(\pm \theta) =   \frac{1}{\sqrt{|A|(4\pi\Delta)^3}}e^{\frac{-\theta^T A^{-1}\, \theta}{4\Delta}}
\end{equation}
where $|A|$ denotes the determinant of $A$ and $\Delta$ is the diffusion time. The Bingham distribution is also known as anisotropic multivariate normal distribution. 

In the case of ODF and the Bingham distributions we actually use $A^{-1}$, but the structure $F(\theta^T A \theta)$ is the same. Due to symmetry in $\theta$, the first moment of these distributions equals zero:
\begin{lemma} \label{Ezero} If $A$ is a given matrix and $q$ is of the form $q(\theta)=F(\theta^T A \theta)$ or $q(\theta)=F(\theta^T A^{-1} \theta)$ then  
$\EE[q]=0$. Moreover, all odd moments are zero too.  
\end{lemma}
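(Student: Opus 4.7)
The plan is to exploit the central symmetry $\theta \mapsto -\theta$ of the sphere. The key observation is that both forms $q(\theta) = F(\theta^T A \theta)$ and $q(\theta) = F(\theta^T A^{-1}\theta)$ are even functions of $\theta$, since $(-\theta)^T A (-\theta) = \theta^T A \theta$ and likewise with $A^{-1}$. Hence $q(-\theta) = q(\theta)$ on $\SSS^{n-1}$.

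First I would handle the expectation. The antipodal map $\sigma:\theta \mapsto -\theta$ is an isometry of $\SSS^{n-1}$ and therefore preserves surface measure. Changing variables $\theta \to -\theta$ in the defining integral \eqref{expectation} gives
\begin{equation*}
\EE[q] = \int_{\SSS^{n-1}} \theta\, q(\theta)\, d\theta = \int_{\SSS^{n-1}} (-\theta)\, q(-\theta)\, d\theta = -\int_{\SSS^{n-1}} \theta\, q(\theta)\, d\theta = -\EE[q],
\end{equation*}
so $\EE[q] = 0$.

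Next I would extend this to arbitrary odd moments. The $m$-th moment is the tensor
\begin{equation*}
\moment_m[q] = \int_{\SSS^{n-1}} \underbrace{\theta \otimes \cdots \otimes \theta}_{m \text{ times}}\, q(\theta)\, d\theta.
\end{equation*}
Applying the same change of variables $\theta \to -\theta$, each of the $m$ tensor factors contributes a sign, while $q$ is unchanged by evenness, yielding $\moment_m[q] = (-1)^m \moment_m[q]$. For odd $m$ this forces $\moment_m[q] = 0$.

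There is no real obstacle here beyond justifying that the antipodal map preserves the spherical measure, which is immediate because it is an orthogonal transformation of $\RR^n$ restricting to an isometry of $\SSS^{n-1}$. The whole proof is therefore a one-line symmetry argument applied componentwise to each entry of the moment tensor.
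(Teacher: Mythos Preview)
Your proof is correct and uses essentially the same idea as the paper: both exploit the evenness of $q$ under the antipodal map $\theta\mapsto-\theta$ to conclude that the odd-moment integrals equal their negatives. The only cosmetic difference is that the paper splits $\SSS^{n-1}$ into two hemispheres and matches them via $\theta\mapsto-\theta$, whereas you apply the change of variables globally; your version is slightly cleaner and also spells out the extension to all odd moments explicitly.
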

\begin{proof}
It is sufficient to consider the case $q(\theta)=F(\theta^T A \theta)$. Let $\theta_1$ denote the first coordinate of $\theta$ in $\RR^n$. We split the sphere $\SSS^{n-1}$ according to positive and negative $\theta_1$ component:
\[ S^+ := \{\theta\in \SSS^{n-1}: \theta_1\geq 0\} \mbox { and } S^- := \{\theta\in \SSS^{n-1}: \theta_1< 0\}.\]
Then 
\begin{eqnarray*}
\EE[q] & = &  \int_{S^+} \theta F(\theta^T A \theta) d\theta + \int_{S^-}  \theta F(\theta^T A \theta) d\theta \\
&=&  \int_{S^+} \theta F(\theta^T A \theta) d\theta + \int_{-S^+}  -\theta F((-\theta^T) A (-\theta)) d(-\theta)\\
&=& 0. 
\end{eqnarray*}
\end{proof}

\noindent For the second moment we focus on the $n$-dimensional peanut distribution
\begin{theorem}\label{var_peanut}
Consider the peanut distribution \eqref{peanut}. Then
\begin{eqnarray*}
\EE[q] &=& 0, \\    
\Var[q]  &=& \frac{1}{(n+2)}\mathbb{I}+\frac{1}{(n+2)tr(A)}(A+A^T).
\end{eqnarray*}
\end{theorem}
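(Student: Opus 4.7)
The first moment vanishes by Lemma \ref{Ezero}, since the peanut density has the form $F(\theta^T A \theta)$; hence the variance--covariance matrix collapses to the bare second moment and only the single integral
\[
M := \int_{\SSS^{n-1}} \theta\theta^T\, (\theta^T A \theta)\, d\theta
\]
must be evaluated, after which $\Var[q] = \frac{n}{|\SSS^{n-1}|\, tr(A)}\, M$ yields the claim.

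To compute $M$ I would mimic the divergence-theorem strategy used in the proof of Theorem \ref{thm_vonmises}. Introduce arbitrary test vectors $a,b \in \RR^n$ and observe that $(a\cdot\theta)(b\cdot\theta)(\theta^T A \theta) = F(v)\cdot\theta$ on $\SSS^{n-1}$, where $F(v) := a\,(b\cdot v)\,(v^T A v)$. The divergence theorem on the unit ball $\mathbb{B}_1(0)$ then gives
\[
a^T M b = \int_{\mathbb{B}_1(0)} \nabla \cdot F\, dv,
\]
and applying the product rule together with identity \eqref{matrixderivate} to differentiate $v^T A v$ yields
\[
\nabla \cdot F = (a\cdot b)\,(v^T A v) + (b\cdot v)\, a^T(A+A^T)v.
\]

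Each of the two resulting volume integrals reduces, after passing to polar coordinates $v = r\theta$ and performing the elementary radial integral $\int_0^1 r^{n+1}\,dr = \tfrac{1}{n+2}$, to the classical spherical moment $\int_{\SSS^{n-1}} \theta_i \theta_j\, d\theta = \tfrac{|\SSS^{n-1}|}{n}\,\delta_{ij}$. The first term contributes $\tfrac{|\SSS^{n-1}|\,tr(A)}{n(n+2)}(a\cdot b)$ and the second contributes $\tfrac{|\SSS^{n-1}|}{n(n+2)}\, a^T(A+A^T)b$, so since $a,b$ are arbitrary
\[
M = \frac{|\SSS^{n-1}|}{n(n+2)}\,\bigl[tr(A)\,\mathbb{I} + (A+A^T)\bigr],
\]
and multiplying by the normalization constant produces the claimed formula. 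The computation is essentially routine bookkeeping; the only subtle point is remembering that, since $A$ is not assumed symmetric at the differentiation step, identity \eqref{matrixderivate} correctly yields $A+A^T$ rather than $2A$, which is exactly the combination that appears in the stated variance.
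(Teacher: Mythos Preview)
Your proof is correct and follows essentially the same route as the paper: test vectors, one application of the divergence theorem, the product rule with identity \eqref{matrixderivate}, and polar coordinates to separate the radial integral $\int_0^1 r^{n+1}\,dr=\tfrac{1}{n+2}$. The only cosmetic difference is that the paper evaluates the remaining second-order spherical integrals by invoking the normalization $\int_{\SSS^{n-1}} q\,d\theta=1$ and a second pass of the divergence theorem, whereas you quote the standard identity $\int_{\SSS^{n-1}}\theta_i\theta_j\,d\theta=\tfrac{|\SSS^{n-1}|}{n}\delta_{ij}$ directly; these are equivalent.
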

\begin{proof} We have seen in Lemma \ref{Ezero} that $\EE(q)=0$. For simplicity, we define the coefficient in \eqref{peanut} as
\begin{equation}\label{cpeanut}
c=\frac{n}{|\mathbb{S}^{n-1}|tr(A)}.
\end{equation} 
To compute the second moment, 
we multiply $\Var[q]$ by two arbitrary vectors $a,b\in\mathbb{R}^{n}$ in order to obtain a scalar value, and then apply the divergence theorem. Using \eqref{peanut} and \eqref{cpeanut}, we obtain the following result
\begin{equation}
\begin{aligned}\notag
a^T (\Var[q] )b &= \int_{\mathbb{S}^{n-1}}^{}a^T\theta \theta^Tq(\theta)bd\theta 
=c\int_{\mathbb{B}_1(0)}^{}\frac{\partial}{\partial v_i}(a_ib_jv^jv^TAv )dv\\
&= a^T\left(c\int_{\mathbb{B}_1(0)}^{}v^TA vdv\ \right)b+a^T\left(c\int_{\mathbb{B}_1(0)}^{}(A+A^T)vv^Tdv \right) b,
\end{aligned}
\end{equation}
where in the second equivalence divergence theorem was applied and in the last equivalence the product rule was used.

Next, we simplify the final two terms in the above calculation. The first term simplifies as
\begin{equation}
\begin{aligned}
	a^T\left(c\int_{\mathbb{B}_1(0)}^{}v^TA vdv\ \right)b&=a^T\left(c\int_{0}^{1}r^{n+1}\int_{\mathbb{S}^{n-1}}^{}\theta^TA\theta d\theta\right) b= \frac{a^Tb}{n+2}, 
\end{aligned}
\end{equation}
since $\int_{\mathbb{S}^{n-1}}^{}q(\theta)d\theta=1$, implying that $\int_{\mathbb{S}^{n-1}}^{}\theta^TA\theta d\theta=1/c.$ For the second term, we set $J=A+A^T$ and simplify
\begin{eqnarray*}
&&a^T\left(c\int_{\mathbb{B}_1(0)}^{}Jvv^Tdv \right) b
=c\int_{0}^{1}r^{n+1}dr\int_{\mathbb{S}^{n-1}}^{}\theta_ja^iJ_{ij}\theta^mb_md\theta\\
&&= \frac{c}{n+2}\int_{\mathbb{B}_1(0)}^{}\frac{\partial}{\partial w_j}(a^iJ_{ij}w^mb_m)dw
=\frac{c}{n+2}a^iJ_{ij}b_j|\mathbb{B}_1(0)|=\frac{1}{(n+2)tr(A)}a^TJb.
\end{eqnarray*}
In the above calculation, divergence theorem was applied at the second equivalence and the last equivalence made use of $\frac{n|\mathbb{B}_1(0)|}{|\mathbb{S}^{n-1}|}=1$ \cite{sphere_ball_vols}.

Therefore, since $a,b\in \mathbb{R}^n$ are arbitrary, it follows that
\begin{equation}
\Var[q]  =\frac{1}{(n+2)}\mathbb{I}+\frac{1}{(n+2)tr(A)}(A+A^T).
\end{equation}
\end{proof}

\section{Eigenvalues and Anisotropy}
When spherical distributions are used in diffusion models, then the variance-covariance matrix describes the anisotropic movement of the random walkers \cite{wolf}. The level of anisotropy can be measured by the eigenvalues of the diffusion tensor and by the {\it fractional anisotropy} ($\FA$) \cite{swanhillen}. 
The fractional anisotropy $\FA$ is an index between $0$ and $1$, where $0$ is the case of full radial symmetry giving isotropy and $1$ is the case of extreme anisotropic alignment to a particular direction.  Another indicator of anisotropy is the ratio $R$ of the largest to the smallest eigenvalue of the diffusion tensor. This quantity ranges between $1$ (isotropic) to $\infty$ (maximal anisotropic). We begin with the definition of the diffusion tensor, based on velocity jump processes, as derived in many publications \cite{hillenmoments,wolf,swanhillen,Engwer}. 

\begin{definition}
	The diffusion tensor $\mathbb{D}$ of a general velocity-jump process is proportional to the variance-covariance matrix $\Var[q]$ of a given spherical distribution $q$ \cite{hillenmoments}. It is defined by
	\begin{equation}
	\mathbb{D}(t,x) = \frac{s^2}{\mu}\Var[q],
	\label{diff_tensor_def}
	\end{equation}
	where $s$ is the speed and $\mu$ is the turning rate of a cell or individual \cite{hillenmoments}.
\end{definition}

The fractional anisotropy (FA) formulas are defined separately for two and three dimensions and we summarize them in the following definition. 

\begin{definition}
\begin{enumerate}
\item	The fractional anisotropy in two dimensions is given by
	\begin{equation}
	\FA_2 = \sqrt{\frac{2[(\lambda_1-\bar{\lambda})^2+(\lambda_2-\bar{\lambda})^2]}{\lambda_1^2+\lambda_2^2}}
	\label{FA2}
	\end{equation} 
	where $\lambda_1,\lambda_2$ are the eigenvalues of a two dimensional tensor and $\bar{\lambda}$ is the average of the two eigenvalues \cite{swanhillen}.
\item 	
	The fractional anisotropy in three dimensions is given by
	\begin{equation}
	\FA_3 = \sqrt{\frac{3[(\lambda_1-\bar{\lambda})^2+(\lambda_2-\bar{\lambda})^2+(\lambda_3-\bar{\lambda})^2]}{2(\lambda_1^2+\lambda_2^2+\lambda_3^2)}}
	\label{FA3}
	\end{equation} 
	where $\lambda_1,\lambda_2,\lambda_3$ are the eigenvalues of a three dimensional tensor and $\bar{\lambda}$ is the average of the three eigenvalues \cite{swanhillen}.
    \item The {\it anisotropy ratio} $R$ is defined as 
    \begin{equation}\label{R}
    R = \frac{\max\{\mbox{eigenvalues of } \mathbb{D}\}}{\min\{\mbox{eigenvalues of } \mathbb{D}\}}
    \end{equation}

\end{enumerate}

\end{definition}

\begin{lemma}\label{eig_peanut}
Given a symmetric, positive definite matrix $A$ with eigenvalue and eigenvector pairs $(\hat\lambda_i, \hat v_i)$, $i=1,\dots,n$. The diffusion tensor arising from the peanut distribution \eqref{peanut} is
\begin{equation}
	\mathbb{D} = \frac{s^2}{\mu(n+2)}\mathbb{I}+\frac{2s^2}{\mu(n+2)tr(A)}A.
	\label{diff_tensor_peanut}
\end{equation}
where $A$ is a symmetric, positive definite matrix. The diffusion tensor \eqref{diff_tensor_peanut}  has the eigenvalues
$$\lambda_n = \frac{s^2}{\mu(n+2)}\left(1+\frac{2\hat{\lambda}_n}{tr(A)}  \right).$$
The fractional anisotropies and the ratio of eigenvalues are bounded as 
\begin{eqnarray}
0 \ \ \leq  \ \ \FA_2 & =& 2|\hat{\lambda}_1-\hat{\lambda}_2|\sqrt{\frac{1}{(tr(A)+2\hat{\lambda}_1)^2+(tr(A)+2\hat{\lambda}_2)^2}} \ \ \leq \ \  \frac{2}{\sqrt{10}}\label{FA2_peanut}\\
0 \ \ \leq \ \ \FA_3 &=&\sqrt{\frac{2[(2\hat{\lambda}_1+\hat{\lambda}_2-\hat{\lambda}_3)^2+(2\hat{\lambda}_2-\hat{\lambda}_1-\hat{\lambda}_3)^2+(2\hat{\lambda}_3-\hat{\lambda}_1-\hat{\lambda}_2)^2]}{3[(tr(A)+2\hat{\lambda}_1)^2+(tr(A)+2\hat{\lambda}_2)^2+(tr(A)+2\hat{\lambda}_3)^2]}} \ \ \leq \ \ \frac{2}{\sqrt{11}}\label{FA3_peanut}\\
1 \ \ \leq \ \ R &=&\frac{tr(A)+2max(\hat{\lambda}_n)}{tr(A)+2min(\hat{\lambda}_n)} \ \ \leq \ \ 3
\end{eqnarray}
\end{lemma}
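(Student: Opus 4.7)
First I would substitute the variance formula from Theorem~\ref{var_peanut} into the diffusion-tensor definition \eqref{diff_tensor_def}. Symmetry of $A$ gives $A+A^T=2A$, which immediately yields the stated expression for $\mathbb{D}$. Since $\mathbb{D}$ has the form $c_1\mathbb{I}+c_2 A$ with positive scalars $c_1,c_2$, the spectral decomposition of $A$ simultaneously diagonalizes $\mathbb{D}$, so the eigenvalues are $\lambda_i = \frac{s^2}{\mu(n+2)}\bigl(1+2\hat\lambda_i/tr(A)\bigr)$ with eigenvectors $\hat v_i$.

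Next I would plug these eigenvalues into \eqref{FA2}, \eqref{FA3}, and \eqref{R}. Because $\sum_i\hat\lambda_i = tr(A)$, the mean simplifies to $\bar\lambda = \frac{s^2}{\mu(n+2)}(1+2/n)$ and $\lambda_i-\bar\lambda = \frac{2s^2}{\mu(n+2)\,tr(A)}\bigl(\hat\lambda_i-tr(A)/n\bigr)$. The common prefactor $s^2/(\mu(n+2))$ cancels between the numerator and denominator of each $\FA$ expression, and the formula for $R$ follows from the monotone dependence of $\lambda_i$ on $\hat\lambda_i$; together these produce the explicit expressions displayed in the lemma.

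The substantive part is the sharp upper bounds $2/\sqrt{10}$, $2/\sqrt{11}$, and $3$. Each ratio is homogeneous of degree zero in $\hat\lambda_i$, so I would normalize by $x_i := \hat\lambda_i/tr(A)$, giving $x_i>0$ with $\sum_i x_i = 1$, and introduce $q := \sum_i x_i^2$. Using $(x_1-x_2)^2 = 2q-1$ and $(1+2x_1)^2+(1+2x_2)^2 = 6+4q$ in two dimensions, and $\sum_i(3x_i-1)^2 = 9q-3$ together with $\sum_i(1+2x_i)^2 = 7+4q$ in three dimensions, the squared anisotropies collapse to the one-variable rational functions
\begin{equation*}
\FA_2^2 = \frac{2(2q-1)}{3+2q}, \qquad \FA_3^2 = \frac{2(3q-1)}{7+4q}.
\end{equation*}
An elementary derivative check shows both are strictly increasing in $q \in [1/n,1]$, so the suprema $2/5$ and $4/11$ are approached as $q\to 1$, i.e.\ as $A$ degenerates to a rank-one matrix. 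The same limit $x_{\max}\to 1$, $x_{\min}\to 0$ gives $R = (1+2x_{\max})/(1+2x_{\min}) \to 3$, while $R\geq 1$ is trivial. The main obstacle is recognizing that both $\FA_n^2$ depend only on the single symmetric invariant $q = \sum_i x_i^2$ after enforcing $\sum_i x_i = 1$; once that reduction is in hand, the constrained optimization on the open simplex reduces to a one-variable monotonicity check and all three bounds fall out together.
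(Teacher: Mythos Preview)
Your argument is correct and coincides with the paper's proof for the form of $\mathbb{D}$, its eigenvalues, and the explicit $\FA_2$, $\FA_3$, and $R$ formulas. The two approaches part ways only at the upper bounds. The paper simply asserts that the maxima of $\FA_2$ and $\FA_3$ are attained by letting one eigenvalue $\hat\lambda_1\to\infty$ with the others fixed, and for $R$ it gives the one-line algebraic inequality
\[
\hat\lambda_{\min}+\sum_{i=2}^{n-1}\hat\lambda_i+3\hat\lambda_{\max}\;\le\;3\Bigl(3\hat\lambda_{\min}+\sum_{i=2}^{n-1}\hat\lambda_i+\hat\lambda_{\max}\Bigr),
\]
valid in any dimension. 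Your route---normalizing to the simplex via $x_i=\hat\lambda_i/tr(A)$ and collapsing $\FA_n^2$ to a strictly increasing rational function of the single invariant $q=\sum_i x_i^2$---is more systematic: it not only identifies the extremal (rank-one) configuration but actually proves that the stated values are suprema rather than just limiting values, a point the paper leaves implicit. The paper's treatment is shorter; yours is more complete, and the reduction to $q$ is a clean observation the paper does not make.
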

\begin{proof}

	To compute the eigenvalues, we multiplying both sides of \eqref{diff_tensor_peanut} by the eigenvector $\hat v_n$ of $A$  and obtain
	$$\mathbb{D}\hat{v}_n = \frac{s^2}{\mu(n+2)}\hat{v}_n+\frac{2s^2}{\mu(n+2)tr(A)}\hat{\lambda}_n\hat{v}_n=\frac{s^2}{\mu(n+2)}\left(1+\frac{2\hat{\lambda}_n}{tr(A)}  \right) \hat{v}_n.$$

  The formulas for the fractional anisotropy follow directly by substituting the eigenvalues of \eqref{diff_tensor_peanut}, and  to the upper bound follows from taking $\lambda_1\rightarrow \infty$. For the ratio $R$ we write 
    \[
    R = \frac{\hat \lambda_1+\sum_{i=2}^{n-1} \hat \lambda_i + 3 \hat \lambda_n}{3\hat \lambda_1+\sum_{i=2}^{n-1}\hat \lambda_i + \hat \lambda_n}
    \leq  \frac{9\hat \lambda_1 +3\sum_{i=2}^{n-1} \hat \lambda_i + 3 \hat \lambda_n}{3\hat \lambda_i+\sum_{i=2}^{n-1}\hat \lambda_i + \hat \lambda_n}\\
    \leq 3.    
\]
where the maximum eigenvalue and minimum eigenvalues of $A$ are denoted by $\hat \lambda_n$ and $\hat \lambda_1$, respectively. All other eigenvalues are called $\hat \lambda_i$ and the trace becomes the sum of the eigenvalues.   
\end{proof}

It is very surprising to see that the anisotropy ratio is bounded by a finite value of $3$ in any space dimension, and the fractional anisotropies  are uniformly bounded by a value less than 1. This means the peanut distribution does not allow for arbitrary anisotropies, and for that reason it is of limited use in the modelling of biological movement data. 

Next, we consider the anisotropy of the bimodal von Mises-Fisher distribution \eqref{bi_vonmises}.
\begin{theorem}\label{eig_bi_vonmises}
Let $u\in \SSS^{n-1}$ be a given direction and $k$ the concentration parameter. The diffusion tensor arising from the bimodal von Mises-Fisher distribution \eqref{bi_vonmises} is of the form
\begin{equation}
	\mathbb{D} = \alpha(k) \mathbb{I} + \beta(k) \, u u^T,
    \quad \mbox{where}\quad \alpha(k)=\frac{s^2I_{\frac{n}{2}}(k)}{\mu kI_{\frac{n}{2}-1}(k)}\quad \mbox{and} \quad \beta(k) = \frac{s^2 I_{\frac{n}{2}+1}(k)}{\mu I_{\frac{n}{2}-1}(k)}.
    	\label{diff_tensor_bi_vonmises}
\end{equation}
 The diffusion tensor \eqref{diff_tensor_bi_vonmises} has eigenvalues 
	$$\lambda_1 = \alpha+\beta, \quad \lambda_2 \ ... \  \lambda_n = \alpha. $$
The anisotropies and the ratio of the eigenvalues satisfy 
\begin{eqnarray*}
0\leq \FA_2 &=& |\beta|\sqrt{\frac{1}{(\alpha+\beta)^2+\alpha^2}} = \sqrt{\frac{1}{(\frac{\alpha}{\beta}+1)^2+(\frac{\alpha}{\beta})^2}}\leq 1\\
   0\leq \FA_3 &=& |\beta| \sqrt{\frac{1}{(\alpha+\beta)^2+2\alpha^2}} = \sqrt{\frac{1}{(\frac{\alpha}{\beta}+1)^2+2(\frac{\alpha}{\beta})^2}}\leq 1\\
        1\leq R &=& 1+\frac{\beta}{\alpha} \leq \infty.
    \end{eqnarray*}
\end{theorem}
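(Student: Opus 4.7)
The plan is to handle the three assertions in order: the form of $\mathbb{D}$, its eigenvalues, and then the anisotropy bounds. The first is a direct substitution: plugging the variance formula from the preceding bimodal lemma into definition \eqref{diff_tensor_def} and relabeling the two coefficients as $\alpha(k)$ and $\beta(k)$ produces the stated rank-one perturbation of $\alpha \mathbb{I}$. For the eigenvalues, I would exploit this structure: since $|u|=1$ we have $uu^T u = u$, so $\mathbb{D} u = (\alpha+\beta)\, u$, and any $w \in u^\perp$ satisfies $uu^T w = 0$, giving $\mathbb{D} w = \alpha w$. Hence $\alpha+\beta$ is a simple eigenvalue and $\alpha$ has multiplicity $n-1$.

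For the fractional anisotropies I would substitute the eigenvalues directly into \eqref{FA2} and \eqref{FA3}. In two dimensions, $\bar\lambda = \alpha+\beta/2$, so the two deviations equal $\pm\beta/2$ and the numerator of \eqref{FA2} becomes $\beta^2$, yielding $\FA_2 = |\beta|/\sqrt{(\alpha+\beta)^2+\alpha^2}$. In three dimensions, $\bar\lambda = \alpha+\beta/3$ gives deviations $2\beta/3,\ -\beta/3,\ -\beta/3$ with squared sum $2\beta^2/3$, and \eqref{FA3} simplifies analogously to $|\beta|/\sqrt{(\alpha+\beta)^2+2\alpha^2}$. The two equivalent forms in each line follow by dividing numerator and denominator by $|\beta|$. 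The ratio $R=(\alpha+\beta)/\alpha = 1+\beta/\alpha$ is immediate from \eqref{R}.

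The bounds then split into an elementary part and a sharpness part. Positivity $\alpha(k),\beta(k) \geq 0$ follows from the series representation \eqref{I}, which shows $I_p(k) > 0$ for $p > -1$ and $k \geq 0$. Given $\alpha,\beta \geq 0$, the inequality $(\alpha+\beta)^2 \geq \beta^2$ yields $\FA_2, \FA_3 \leq 1$, and $\beta \geq 0$ yields $R \geq 1$; the lower bounds on FA are trivial. The delicate step, and the point of contrast with Lemma \ref{eig_peanut}, is showing that the supremum values $1$ and $\infty$ are actually attained in the limit. For this I would invoke the asymptotic expansion \eqref{large_k_bessel}, which gives $I_p(k) \sim e^k/\sqrt{2\pi k}$ to leading order regardless of $p$. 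Hence $\beta(k) \to s^2/\mu$ while $\alpha(k) = O(1/k) \to 0$ as $k \to \infty$, forcing $R \to \infty$ and both fractional anisotropies $\to 1$. The careful bookkeeping of ratios of three consecutive Bessel functions needed for this limit is the main obstacle in the proof, albeit a technical rather than conceptual one.
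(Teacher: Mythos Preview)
Your proposal is correct and follows essentially the same approach as the paper: the eigenvector argument via $u$ and $u^\perp$, the direct substitution into \eqref{FA2}--\eqref{R}, and the appeal to the large-$k$ asymptotics \eqref{large_k_bessel} for sharpness all match. One small difference worth noting: you establish $\FA_2,\FA_3\leq 1$ and $R\geq 1$ by the elementary inequality $(\alpha+\beta)^2\geq\beta^2$ together with $\alpha,\beta\geq 0$, whereas the paper instead computes the $k\to 0$ limits (using the series \eqref{I}) to show $\alpha\to s^2/(\mu n)$, $\beta\to 0$, and hence that the lower endpoints are approached---your argument actually proves the stated inequalities for all $k$ rather than only at the endpoints, which is slightly cleaner.
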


\begin{proof}
To compute the eigenvalues for \eqref{diff_tensor_bi_vonmises} we first take the vector $u$ and show that it is indeed an eigenvector of \eqref{diff_tensor_bi_vonmises} since
	\begin{equation}
	\mathbb{D} u = \alpha u+ \beta uu^Tu=(\alpha+\beta)u,
	\end{equation}
	hence, $\lambda_1 = \alpha+\beta$. Now we take any $u^\perp\in \mathbb{S}^{n-1}$ (where $u^Tu^\perp=0$) and show that it is an eigenvector of \eqref{diff_tensor_bi_vonmises} as well since
	\begin{equation}
	\mathbb{D} u^\perp = \alpha u^\perp+ \beta uu^Tu^\perp=\alpha u^\perp.
	\end{equation}
	So $u^\perp\in \mathbb{S}^{n-1}$ is an eigenvector with an eigenvalue $\lambda = \alpha$. Since we can define $n-1$ perpendicular vectors, we conclude that $\lambda_2 \ ... \  \lambda_n = \alpha.$

    The formulas for the anisotropies and for $R$ follow from substituting the determined eigenvalues into the definitions. To determine the upper bounds for the formulas we compute the limits of $\alpha(k)$ and $\beta(k)$. We start with the isotropic case of $k \rightarrow 0$. We apply the expansion of Bessel functions as shown in \eqref{I}
\begin{equation} \label{alpha_to_0(k)}
    \begin{aligned}
        \lim_{k\to0} \alpha(k)  &=\frac{s^2}{\mu} \lim_{k\to0}\frac{\frac{1}{\Gamma(\frac{n}{2}+1)2^{\frac{n}{2}}}k^{\frac{n}{2}}+ \frac{1}{\Gamma(\frac{n}{2}+2)2^{\frac{n}{2}+2}}k^{\frac{n}{2}+2}+h.o.t}{k\left[ \frac{1}{\Gamma(\frac{n}{2})2^{\frac{n}{2}-1}}k^{\frac{n}{2}-1}+\frac{1}{\Gamma(\frac{n}{2}+1)2^{\frac{n}{2}+1}}k^{\frac{n}{2}+1}+h.o.t\right]}
        &= \frac{s^2}{\mu n} .
    \end{aligned}
\end{equation}
and 
\begin{equation}\label{beta_to_0}
    \begin{aligned}
         \lim_{k\to0} \beta(k) &= \frac{s^2}{\mu} \lim_{k\to0}\frac{\frac{1}{\Gamma(\frac{n}{2}+2)2^{\frac{n}{2}+1}}k^{\frac{n}{2}+1}+ \frac{1}{\Gamma(\frac{n}{2}+3)2^{\frac{n}{2}+3}}k^{\frac{n}{2}+3}+h.o.t}   {\frac{1}{\Gamma(\frac{n}{2})2^{\frac{n}{2}-1}}k^{\frac{n}{2}-1}+\frac{1}{\Gamma(\frac{n}{2}+1)2^{\frac{n}{2}+1}}k^{\frac{n}{2}+1}+h.o.t}
        &=0.
    \end{aligned}
\end{equation}
To evaluate the limit  as $k\rightarrow \infty$ we apply the expansion \eqref{large_k_bessel} for large $k$ to obtain
\begin{equation}
\begin{aligned}
    \lim_{k\to \infty} \alpha(k) &=\frac{s^2}{\mu} \lim_{k\to \infty} \frac{\frac{e^k}{\sqrt{2\pi k}}[1 - \frac{1}{8k}(4(\frac{n}{2})^2-1)+ \ ...] }{k\frac{e^k}{\sqrt{2\pi k}}[1 - \frac{1}{8k}(4(\frac{n}{2}-1)^2-1) + \ ...]}=0,
\end{aligned}
\end{equation}
and 
\begin{equation}
    \begin{aligned}
        \lim_{k\to \infty} \beta(k) &=\frac{s^2}{\mu} \lim_{k\to \infty}  \frac{\frac{e^k}{\sqrt{2\pi k}}[1 - \frac{1}{8k}(4(\frac{n}{2}+1)^2-1)+ \ ...] }{\frac{e^k}{\sqrt{2\pi k}}[1 - \frac{1}{8k}(4(\frac{n}{2}-1)^2-1) + \ ...]}=1.
    \end{aligned}
\end{equation}

Now we compute the bounds for the ratio $R(k)$. 
We see that if $k\rightarrow 0$, $R(k) \rightarrow 1$ by the above calculations. Further, $R(k) \rightarrow \infty$ as $k\rightarrow \infty$ since
\begin{equation}
    \lim_{k\to \infty} \frac{\beta(k)}{\alpha(k)} = \lim_{k\to \infty} \frac{kI_{\frac{n}{2}+1}}{I_{\frac{n}{2}}} =\lim_{k\to \infty}  \frac{k\frac{e^k}{\sqrt{2\pi k}}[1 - \frac{1}{8k}(4(\frac{n}{2}+1)^2-1)+ \ ...] }{\frac{e^k}{\sqrt{2\pi k}}[1 - \frac{1}{8k}(4(\frac{n}{2})^2-1) + \ ...]}=\infty.
\end{equation}
And finally,
\begin{equation}
    \lim_{k\to \infty} \frac{\alpha(k)}{\beta(k)} =  \lim_{k\to \infty} \frac{I_\frac{n}{2}}{kI_{\frac{n}{2}+1}} = \lim_{k\to \infty} \frac{\frac{e^k}{\sqrt{2\pi k}}[1 - \frac{1}{8k}(4(\frac{n}{2})^2-1) + \ ...]}{k\frac{e^k}{\sqrt{2\pi k}}[1 - \frac{1}{8k}(4(\frac{n}{2}+1)^2-1)+ \ ...]}=0.
\end{equation}
So,
 $\FA_2\rightarrow 0$ and $\FA_3 \rightarrow 0$ as $k\rightarrow 0$. Further, $\FA_2 \rightarrow 1$ and $\FA_3 \rightarrow 1$ as $k\rightarrow \infty$.  
 \end{proof}

\section{Discussion}

The standard way to compute moments of spherical distributions would be to use polar coordinates and engage is a series of trigonometric integrations. While this is often possible in two and three dimensions, it gets tedious, or impossible, in higher dimensions. Here we present an alternative method that can be used in any space dimension. This method was first used in \cite{hillenmoments} in two and three dimensions. It uses the divergence theorem on the unit ball to transform and simplify the integrals as much as possible. As a result, we obtain explicit formulas for the variance-covariance matrices of the $n$-dimensional von Mises-Fisher distribution (Theorem \ref{thm_vonmises}) and the $n$-dimensional peanut distribution (Theorem \ref{var_peanut}) as new results. We also used this method to consider other spherical distributions such as the ordinary distribution function (ODF) \eqref{ODF} and the Bingham distribution \eqref{bingham}, which, unfortunately did not lead to closed form expressions. 

Once the variance-covariance matrices were established, we considered levels of anisotropy as measured through the fractional anisotropies and the ratio of maximal over minimal eigenvalues. Surprisingly, the anisotropy of the peanut distribution is bounded away from 1 (Lemma \ref{eig_peanut}) making it problematic to use in modelling. The anisotropy of the von Mises-Fisher distribution covers the entire range from isotropic to fully anisotropic (Theorem \ref{eig_bi_vonmises}), and seems to have no restriction if used for modelling. 

 Using the explicit formulas derived here,  will significantly decrease the computation time in problems having a large amount of data that require a higher dimensional von Mises-Fisher distribution  \cite{banerjee2005clustering} and decrease the computation time when simulating the transport equations (partial differential equations) that utilize the biological movement data \cite{butterfly, swanhillen,turtle,wolf}. Moreover, the formulas derived for the generalized von Mises-Fisher distribution can be easily extended to compute the expectation and variance covariance matrices of mixed von Mises-Fisher distributions by substituting the appropriate directional vectors and adding up the terms. The method of integration, developed here, might also inspire researchers to crack some of the unknown integrals that relate to spherical distributions. 

\section*{Acknowledgments}
AS acknowledges the funding from the Alberta Graduate Excellence Scholarship. TH is supported through a discovery grant of the Natural Science and Engineering Research Council of Canada (NSERC), RGPIN-2023-04269.

\section*{Conflict of interest}
All authors declare no conflicts of interest in this paper.






\end{document}